\documentclass[a4paper,aps,prl,showpacs,twocolumn,superscriptaddress]{revtex4-1}   
   
\usepackage[utf8]{inputenc}  
\usepackage[T1]{fontenc}     
\usepackage[british]{babel}  
\usepackage{lmodern}  

\usepackage[scaled=1.03]{inconsolata} 
\usepackage[usenames,dvipsnames]{color} 
\usepackage[colorlinks,citecolor=blue,linkcolor=magenta,urlcolor=blue]{hyperref}  
\usepackage{graphicx} 
\usepackage{tikz}
\usepackage[babel]{microtype}  
\usepackage{amsmath,amssymb,amsthm,bm,mathtools,amsfonts,mathrsfs,bbm,dsfont} 
\usepackage{xspace}  
\usepackage{multirow}
\usepackage{verbatim}
\usepackage{enumerate}
\usepackage{tcolorbox}
\usepackage[normalem]{ulem}


\usepackage{physics}
\newcommand{\id}{\ensuremath{\mathds{1}}}
\usepackage{bbold}
\usepackage[mathscr]{eucal}


\newcommand{\Sys}{\textrm{S}} 
\newcommand{\Anc}{\textrm{A}} 
\newcommand{\Env}{\textrm{E}}
\newcommand{\Dyn}{\mathcal{D}}


\newcommand{\kb}[2]{|#1\rangle\langle#2|} 

\newcommand{\M}{\mathsf{M}}

\def\e{\ensuremath{\mathrm{e}}}
\def\i{\ensuremath{\mathrm{i}}}




\newcommand{\cpt}{\mathcal{E}}
\newcommand{\dynamics}[1]{(#1)}

\newtheorem{definition}{Definition}
\newtheorem{theorem}{Theorem}

\newcommand{\eoa}{E^\sharp}
\newcommand{\eof}{E}
\newcommand{\coa}{\mathcal{C}^\sharp}
\newcommand{\cof}{\mathcal{C}}
\newcommand{\choi}{\chi}

\usepackage{bbold}
\newcommand{\one}{\mathbb{1}}
\newcommand{\damp}{\kappa}

\newcommand{\maptad}[1]{{\mathcal{A}_{#1}}}

\newcommand{\coupling}{\alpha}

\renewcommand{\d}{\mathrm{d}}
\newcommand{\placeholder}{\rho}
\newcommand{\gammashort}{\nu}

\usepackage{tikz}
\usetikzlibrary{patterns}
\usetikzlibrary{patterns.meta}

\tikzdeclarepattern{
  name=hatch,
  parameters={\hatchsize,\hatchangle,\hatchlinewidth},
  bounding box={(-.1pt,-.1pt) and (\hatchsize+.1pt,\hatchsize+.1pt)},
  tile size={(\hatchsize,\hatchsize)},
  tile transformation={rotate=\hatchangle},
  defaults={
    hatch size/.store in=\hatchsize,hatch size=5pt,
    hatch angle/.store in=\hatchangle,hatch angle=0,
    hatch linewidth/.store in=\hatchlinewidth,hatch linewidth=.4pt,
  },
  code={
      \draw[line width=\hatchlinewidth] (0,0) -- (\hatchsize,\hatchsize);
  }
}

\tikzdeclarepattern{
  name=mylines,
  parameters={
      \pgfkeysvalueof{/pgf/pattern keys/size},
      \pgfkeysvalueof{/pgf/pattern keys/angle},
      \pgfkeysvalueof{/pgf/pattern keys/line width},
  },
  bounding box={
    (0,-0.5*\pgfkeysvalueof{/pgf/pattern keys/line width}) and
    (\pgfkeysvalueof{/pgf/pattern keys/size},
0.5*\pgfkeysvalueof{/pgf/pattern keys/line width})},
  tile size={(\pgfkeysvalueof{/pgf/pattern keys/size},
\pgfkeysvalueof{/pgf/pattern keys/size})},
  tile transformation={rotate=\pgfkeysvalueof{/pgf/pattern keys/angle}},
  defaults={
    size/.initial=5pt,
    angle/.initial=45,
    line width/.initial=.4pt,
  },
  code={
      \draw [line width=\pgfkeysvalueof{/pgf/pattern keys/line width}]
        (0,0) -- (\pgfkeysvalueof{/pgf/pattern keys/size},0);
  },
}

\usetikzlibrary{arrows.meta, shadings}

\usepackage{pgfplots}
\usepackage{tikz-3dplot}
\tdplotsetmaincoords{60}{115}
\tdplotsetrotatedcoords{25}{-25}{0}
\pgfplotsset{compat=newest}
\usetikzlibrary{arrows.meta, arrows, positioning}

\pgfplotsset{compat = newest}

\newcommand{\blochdyn}[8]{
	\draw (1,0, 0) node[anchor=north west] {$x$};
	\draw (0, 0.95, 0.2) node[anchor=north west] {$y$};
	\draw (0, 0, 1) node[anchor=north west] {$z$};
	
	\shade[ball color = lightgray,opacity = 0.2] (0,0,0) circle (1cm);
	\tdplotsetrotatedcoords{#6}{#7}{#8};
	
	\shade[tdplot_rotated_coords,right color=orange,middle color=red,left color=blue,opacity=0.3,shading angle=-110] (#1,#2,#3) circle (#4 and  #5);
	\shade[tdplot_rotated_coords,ball color=orange,opacity=0.2] (#1,#2,#3) circle (#4 and #5);
	
	\tdplotsetrotatedcoords{0}{0}{0};
	\draw[dashed,tdplot_rotated_coords,gray] (0,0,0) circle (1);
	
	\tdplotsetrotatedcoords{90}{90}{90};
	\draw[dashed,tdplot_rotated_coords,gray] (1,0,0) arc (0:360:1);
	
	\tdplotsetrotatedcoords{0}{90}{90};
	\draw[dashed,tdplot_rotated_coords,gray] (1,0,0) arc (0:360:1);
	
	\draw[-{Latex[length=2mm]}] (0,0,0) -- (1,0,0);
	\draw[-{Latex[length=2mm]}] (0,0,0) -- (0,1,0);
	\draw[-{Latex[length=2mm]}] (0,0,0) -- (0,0,1);
	\draw[dashed, gray] (0,0,0) -- (-1,0,0);
	\draw[dashed, gray] (0,0,0) -- (0,-1,0);}


\newcommand{\blochdynwith}[8]{
	\draw (1,0, 0) node[anchor=north west] {$x$};
	\draw (0, 0.95, 0.2) node[anchor=north west] {$y$};
	\draw (0, 0, 1) node[anchor=north west] {$z$};
	\draw (0.2, -0.185, -1) node[anchor=north west] {$\ket{0}$};
	\draw (0,-0.34,2) node[anchor=north west] {$\ket{1}$};
	
	\shade[ball color = lightgray,opacity = 0.2] (0,0,0) circle (1cm);
	\tdplotsetrotatedcoords{#6}{#7}{#8};
	
	\shade[tdplot_rotated_coords,right color=orange,middle color=red,left color=blue,opacity=0.3,shading angle=-110] (#1,#2,#3) circle (#4 and  #5);
	\shade[tdplot_rotated_coords,ball color=orange,opacity=0.2] (#1,#2,#3) circle (#4 and #5);
	
	\tdplotsetrotatedcoords{0}{0}{0};
	\draw[dashed,tdplot_rotated_coords,gray] (0,0,0) circle (1);
	
	\tdplotsetrotatedcoords{90}{90}{90};
	\draw[dashed,tdplot_rotated_coords,gray] (1,0,0) arc (0:360:1);
	
	\tdplotsetrotatedcoords{0}{90}{90};
	\draw[dashed,tdplot_rotated_coords,gray] (1,0,0) arc (0:360:1);
	
	\draw[-{Latex[length=2mm]}] (0,0,0) -- (1,0,0);
	\draw[-{Latex[length=2mm]}] (0,0,0) -- (0,1,0);
	\draw[-{Latex[length=2mm]}] (0,0,0) -- (0,0,1);
	\draw[dashed, gray] (0,0,0) -- (-1,0,0);
	\draw[dashed, gray] (0,0,0) -- (0,-1,0);}



\definecolor{mathematicablue}{rgb}{0.87,0.94,1}
\definecolor{mathematicadarkblue}{rgb}{0.368417, 0.506779, 0.709798}
\definecolor{mathematicaorange}{rgb}{1,0.9,0.8}
\definecolor{mathematicadarkorange}{rgb}{1,0.5,0}

\newcommand{\twoqubgate}[4]{
	\draw[line width=1.2pt, color=mathematicadarkblue, opacity=1, rounded corners=2pt, fill=mathematicablue, fill opacity=1] (#1, -1.65-#2) rectangle ++(#4,1.5) node[midway, color=black, opacity=1]{#3};
}

\newcommand{\qwire}[3]{
	\draw[line width=1.2pt,, color=mathematicadarkblue, opacity=1] (#1, #2) -- (#1+#3, #2);
}

\newcommand{\qwireup}[4]{
	\draw[line width=1.05pt] (#1, #2) -- (#3, #4);
}

\newcommand{\memoryarrow}[3]{
\draw [-latex, line width=1.05pt] (#1,#2) -- (#1,-0.5+#2) -- (#1+#3, -0.5+#2) --(#1+#3, -0+#2);
}

\newcommand{\measurement}[2]{
\node [line width=1.05pt] (36) at (0.125+#1, -0.15+#2) {};
\node [line width=1.05pt] (37) at (0.875+#1, -0.15+#2) {};
\node [line width=1.05pt] (38) at (0.5+#1, -0.25+#2) {};
\node [line width=1.05pt] (39) at (0.85+#1, 0.25+#2) {};
\draw [line width=1.05pt, bend left=60, looseness=1.10] (36.center) to (37.center);
\draw [-latex, line width=1.05pt] (38.center) to (39.center);
\draw[line width=1.05pt] (#1,-0.35-#2) rectangle ++(1,0.7);
}



\makeatletter
\def\maketitle{
\@author@finish
\title@column\titleblock@produce
\suppressfloats[t]}
\makeatother


\begin{document}

\author{Charlotte Bäcker}
\affiliation{Institute of Theoretical Physics, TUD Dresden University of Technology, 01062, Dresden, Germany}
\author{Konstantin Beyer}
\affiliation{Institute of Theoretical Physics, TUD Dresden University of Technology, 01062, Dresden, Germany}
\affiliation{Department of Physics, Stevens Institute of Technology, Hoboken, New Jersey 07030, USA}
\author{Walter T. Strunz}
\affiliation{Institute of Theoretical Physics, TUD Dresden University of Technology, 01062, Dresden, Germany}

\title{Local disclosure of quantum memory in non-Markovian dynamics}
\date{\today}

\begin{abstract}
Non-Markovian processes may arise in physics due to memory effects of environmental degrees of freedom. For quantum non-Markovianity, it is an ongoing debate to clarify whether such memory effects have a verifiable quantum origin, or whether they might equally be modeled by a classical memory. In this contribution, we propose a criterion to test locally for a truly quantum memory. 
The approach is agnostic with respect to the environment, as it solely depends on the local dynamics of the system of interest. 
Experimental realizations are particularly easy, as only single-time measurements on the system itself have to be performed.
We study memory in a variety of physically motivated examples, both for a time-discrete case, and for time-continuous dynamics. For the latter, we are able to provide an interesting class of non-Markovian master equations with classical memory that allows for a physically measurable quantum trajectory representation.
\end{abstract}

\maketitle

\paragraph{Introduction---}
Applying quantum technologies to real-world problems requires a fundamental understanding of all underlying physical processes. Possible quantum advantages rely on our ability to cope with noise and dissipation, induced by the environment~\cite{knillResilientQuantumComputation1998,harrowQuantumComputationalSupremacy2017,preskillQuantumComputingNISQ2018,roffeQuantumErrorCorrection2019a,bruzewiczTrappedionQuantumComputing2019,grumblingQuantumComputingProgress2019,wangNoiseinducedBarrenPlateaus2021}.
A detailed modeling of environmental impacts entails memory effects, showing \emph{non-Markovianity}~\cite{vacchiniMarkovianityNonMarkovianityQuantum2011, breuerFoundationsMeasuresQuantum2012, rivasQuantumNonMarkovianityCharacterization2014, hallCanonicalFormMaster2014, breuerColloquiumNonMarkovianDynamics2016,liConceptsQuantumNonMarkovianity2018, breuerMeasureDegreeNonMarkovian2009,rivasEntanglementNonMarkovianityQuantum2010}.
This requires advanced methods to describe quantum devices, yet non-Markovianity might also help to mitigate errors \cite{oreshkovContinuousQuantumError2007, manHarnessingNonMarkovianQuantum2015, weiCharacterizingNonMarkovianOffResonant2023}.

In recent years, it has become evident that non-Markovianity in quantum dynamics need not have a quantum origin \cite{vacchiniClassicalAppraisalQuantum2012, megierEternalNonMarkovianityRandom2017,filippovDivisibilityQuantumDynamical2017,megierMemoryEffectsQuantum2021}.
The ability to distinguish memory effects arising from the coupling to an environmental quantum system from those of classical nature is of fundamental importance. On the one hand, it will help to improve the performance of quantum devices, as error-correction schemes differ in the two cases. On the other hand, such studies are inevitable when trying to prove the quantum nature of unfathomable degrees of freedom such as gravity~\cite{carneyUsingAtomInterferometer2021,maLimitsInferenceGravitational2022}.

Clearly, full operational access to the environment reveals its quantum nature, a situation hardly met in experiments. Indeed, standard open system theory aims at an effective dynamical description of the system of interest \(\Sys\), without any explicit reference to the environment \(\Env\).
Accordingly, we assume throughout that
information is available from measurements on \(\Sys\) only. The question arises whether such local information suffices to distinguish memory effects induced by an unknown quantum environment from those that may arise classically.

Recently, this question has been addressed in the framework of process tensors~\cite{milzWhenNonMarkovianQuantum2020,giarmatziWitnessingQuantumMemory2021,nerySimpleMaximallyRobust2021,tarantoCharacterisingHierarchyMultitime2023}. A process tensor bears all information about the statistics of any possible sequence of measurements that could be performed locally on \(\Sys\). 
The classicality of the environmental memory can then be related to the separability of the process tensor~\cite{tarantoCharacterisingHierarchyMultitime2023}. 
While the process tensor is an elegant object from a theory point of view, its experimental determination is certainly challenging since it requires full multi-time statistics of the process. 
By contrast, the results of this Letter are based on the system dynamics alone, and, thus, are both conceptually and experimentally more easily accessible.

We should note that there is an interesting angle to our approach, relating it to the existence of physically measurable quantum trajectories.
We will explore these connections later, establishing non-Markovian master equations that allow for such a trajectory representation.

Formally, we define a \emph{dynamics} \(\mathcal{D}\) on \(\Sys\) to be a 
family of completely positive trace-preserving (CPT) maps \(\mathcal{D} = (\cpt_n)\) mapping the system state from the initial time \(t_0\) to time \(t_n\).
This definition covers every physically valid evolution where the system
and its environment are initially in a product state (uncorrelated).
To determine the dynamics, channel tomography has to be performed for each \(\cpt_n\), but no multi-time statistics is needed. 
Besides this experimental advantage of the approach, it conforms very well with the traditional open quantum system frameworks based on dynamical maps and master equations.

In this letter, we show how to disclose a truly quantum memory for non-Markovian dynamics, based on such local information. 
The proposed witness thus locally reveals a new, additional property of quantum non-Markovian dynamics,
which is hidden for all known measures of non-Markovianity.

\paragraph{Classical and Quantum Memory---}
Let us illustrate the idea with a simple toy model of a two-step dynamics \(\mathcal{D}\), given by the CPT maps
\begin{align}
	\cpt_1\left[\rho_\Sys\right] &= \tr_\Env\left[ U_1 ( \rho_\Sys \otimes \rho_\Env) U_1^\dagger\right], \notag \\
	\cpt_2\left[\rho_\Sys\right] &= \tr_\Env\left[ U_2 U_1 ( \rho_\Sys \otimes \rho_\Env) U_1^\dagger U_2^\dagger\right],
 \label{eq:simple-example-maps}
\end{align}
where \(\rho_\Sys\) and \(\rho_\Env\) are the initial states of 
system \(\Sys\) and environment \(\Env\), respectively.
The global dynamics are mediated by unitaries \(U_{1,2}\).
We let both, \(\Sys\) and \(\Env\) be qubits, and set \(\rho_\Env = \kb{0}{0}\). Crucially, for our toy model we fix the second unitary to be the inverse of the first, i.e., \(U_2 = U_1^\dagger\).
Accordingly, the second CPT map is trivial, \(\cpt_2 = \id\).  
As for \(U_1\), we consider two different choices:
\begin{align}
\label{eq:toy-model-interaction2}
	U_1^\text{dephase} &= \exp[ - i f (\sigma_x \otimes \sigma_x)], \notag\\
	U_1^\text{damp} &= \exp[- i g (\sigma_+\otimes \sigma_- + \sigma_-\otimes \sigma_+)],
\end{align}
with real parameters \(f\) and \(g\) determining the strength of the map~\footnote{See Supplemental Material at [ ... ] for details, which includes Refs.~\cite{breuerColloquiumNonMarkovianDynamics2016,liConceptsQuantumNonMarkovianity2018,rivasEntanglementNonMarkovianityQuantum2010,breuerMeasureDegreeNonMarkovian2009,lorenzoGeometricalCharacterizationNonMarkovianity2013a,smirneConnectionMicroscopicDescription2021,breuerTheoryOpenQuantum2007,kretschmerCollisionModelNonMarkovian2016,garrawayNonperturbativeDecayAtomic1997,diosiNonMarkovianQuantumState1998,heinekenQuantummemoryenhancedDissipativeEntanglement2021,bethruskaiAnalysisCompletelypositiveTracepreserving2002,zimanDescriptionQuantumDynamics2005,zimanOpenSystemDynamics2010a,hallCanonicalFormMaster2014,wisemanQuantumMeasurementControl2009,brunSimpleModelQuantum2002,diosiNonMarkovianContinuousQuantum2008,wisemanPureStateQuantumTrajectories2008,breuerGenuineQuantumTrajectories2004,pellegriniNonMarkovianQuantumRepeated2009,kronkeNonMarkovianQuantumTrajectories2012,smirneRateOperatorUnraveling2020,megierContinuousQuantumMeasurement2020,diosiNonMarkovianStochasticSchrodinger1997,breuerStochasticWavefunctionMethod1999,jackNonMarkovianQuantumTrajectory1999,piiloNonMarkovianQuantumJumps2008,hartmannExactOpenQuantum2017,gasbarriStochasticUnravelingsNonMarkovian2018,linkNonMarkovianQuantumDynamics2022a,beckerQuantumTrajectoriesTimelocal2023,karasikHowManyBits2011,daryanooshQuantumJumpsAre2014,beyerCollisionmodelApproachSteering2018}}.
The first leads to a partial dephasing in \(x\)-basis, the second choice induces a partial amplitude damping  (see Fig.~\ref{fig:bloch-spheres}).
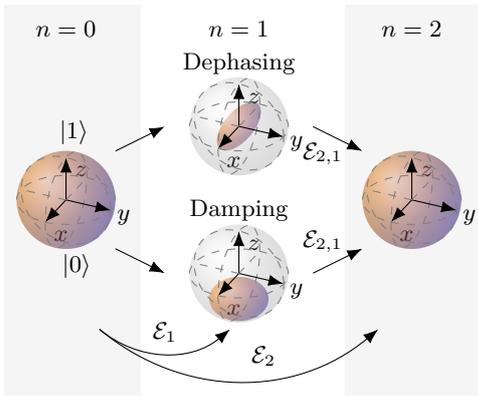
\begin{figure}
	\centering
\begin{tikzpicture}[scale=0.65]
    \begin{scope}[tdplot_rotated_coords]
        \filldraw [fill=lightgray, opacity=0.15, draw=none] (0, -4.75, -4) rectangle ++(0, 2.75,8);
        \filldraw [fill=lightgray, opacity=0.15, draw=none] (0, 4.9, 4) rectangle ++(0, -2.75,-8);
        \draw[-{Latex[length=2mm]}] (0, -2.5, 1) -- ++(0, 1, 0.5);
        \draw[-{Latex[length=2mm]}] (0, 1.5, 1.5) -- ++(0, 1, -0.5);
        \draw[-{Latex[length=2mm]}] (0, -2.5, -1) -- ++(0, 1, -0.5);
        \draw[-{Latex[length=2mm]}] (0, 1.5, -1.5) -- ++(0, 1, 0.5);
        \node (1) at (0, -3, -2.5) {};
        \node (2) at (0, 0, -2.5) {};
        \node (3) at (0, 3, -2.5) {};
        \draw[-{Latex[length=2mm]}] (1) to [bend right=40] (2);
        \draw[-{Latex[length=2mm]}] (1) to [bend right=40] (3);
        \node (ad) at (0, 0, 2.8) {Dephasing};
        \node (deph) at (0, 0, -0.2) {Damping};
        \node (e1) at (0, -1.5, -2.7) {$\cpt_1$};
        \node (e2) at (0, 0.5, -3.2) {$\cpt_2$};
        \node (e12) at (0,1.7, 0.95) {$\cpt_{2,1}$};
        \node (e12) at (0, 1.7, -0.85) {$\cpt_{2,1}$};
        \node (i) at (0, -3.5, 3.5) {$n=0$};
        \node (ii) at (0, 0, 3.5) {$n=1$};
        \node (iii) at (0, 3.5, 3.5) {$n=2$};
    \end{scope}
    \begin{scope}[xshift=-3.5cm, tdplot_main_coords]
        \blochdynwith{0}{0}{0}{1}{2}{25}{0}{90}
    \end{scope}
    \begin{scope}[yshift=1.5cm, tdplot_main_coords]
        \blochdyn{0}{0}{0}{0.3}{1}{0}{108}{58}
    \end{scope}
    \begin{scope}[yshift=-1.5cm, tdplot_main_coords]
        \blochdyn{0}{0}{-0.6}{0.6}{0.9}{25}{0}{90}
    \end{scope}
    \begin{scope}[xshift=3.5cm, tdplot_main_coords]
        \blochdyn{0}{0}{0}{1}{2}{25}{0}{90}
    \end{scope}
\end{tikzpicture}
	\caption{{Image of the Bloch sphere under the intermediate dephasing and damping dynamics, respectively.} Both dynamics are non-Markovian as witnessed by the expansion during the second step. Dephasing is realizable with only classical memory, while amplitude damping is not. {For this example, \(f = 0.64\) and \(g= 0.89\) in Eq.~\eqref{eq:toy-model-interaction2}. }}
 \label{fig:bloch-spheres}
\end{figure}
Almost any pure initial state of \(\Sys\) 
gets entangled with the environment in this first step, and therefore mixed. The second interaction then rewinds these correlations and the system returns to its initial state.
Thus, we witness non-Markovian dynamics according to all common criteria~\cite{Note1}. 
In this global picture it is fair to say that the repeated interaction with the same environmental quantum system
leads to non-Markovianity.
Clearly, \(\Env\) is that (quantum) memory.

However, once we look at the local dynamics \(\Dyn\) alone -- meaning that we know the maps \((\cpt_1, \cpt_2)\) but we are ignorant about the global dynamics including \(\Env\) -- the analysis is different:
any {single} qubit dephasing dynamics, {no matter what its true physical origin is, is indistinghuishable from a random unitary evolution} \cite{gregorattiQuantumLostFound2003,Helm2009}. {Since} classical memory suffices to keep track of the random choice of the unitary, no quantum environment \(\Env\) is needed~\cite{Note1}.

In the case of the partial amplitude damping, the situation is less obvious. 
Remarkably, we will present a criterion below which verifies that the amplitude damping example indeed requires quantum memory, i.e., cannot be modeled by classical memory. 
To proceed, we need to define properly what we mean by \emph{classical memory}:
\begin{definition}
	\label{def:classicalmemory}
	Given two CPT maps $\cpt_1$ and $\cpt_2$. The dynamics $\Dyn = \dynamics{\cpt_1, \cpt_2}$ can be realized with \emph{classical memory}, iff there is at least one Kraus decomposition $\{M_i\}$ of $\cpt_1\left[\rho_\mathrm{S}\right] = \sum_{i}^{} M_i \rho_\mathrm{S} M_i^\dagger$ and suitable CPT maps $\Phi_i$ such that
	\begin{align}
		\label{eq:classicalmemory}
	\cpt_2\left[\rho_\mathrm{S}\right] = \sum_{i} \Phi_i \left[M_i \rho_\mathrm{S} M_i^\dagger\right].
	\end{align}
	Otherwise the dynamics is said to require \emph{truly quantum memory}.
\end{definition}

Let us elaborate why this definition embraces the idea of dynamics with \emph{classical memory}. 
Eq.~(\ref{eq:classicalmemory}) describes a sequential process. The Kraus decomposition \(\{M_i\}\) can be seen as a local measurement on \(\Sys\) which on average realizes the first map \(\cpt_1\).
The second step with CPT map \(\Phi_i\) is \emph{conditioned on that outcome} \(i\) of the first measurement. 
Crucially, the label \(i\) is classical data, storable in a classical memory.
By contrast, for a dynamics that cannot be written in the form above, a persisting quantum environment has to be present throughout both dynamical steps, as suggested by our toy model

Further remarks: {The definition of a dynamics \(\Dyn\) requires that} \(\cpt_2\) is a CPT map from the initial time {\(t_0\)} to time \(t_2\).  {By contrast,} the average   map {from the intermediate time \(t_1\) to \(t_2\), given by} \(\cpt_{2,1} = \cpt_2 \circ \cpt_1^{-1}\), is in general \emph{not} CPT {(see also Fig.~\ref{fig:bloch-spheres})}. 
Moreover, for the actual implementation of the measurement \(\{M_i\}\) and the channels \(\Phi_i\), independent ancillary quantum systems
 {might be necessary. However,}  these can always be discarded after use, so they do not serve as a memory.

Markovian quantum dynamics satisfies Eq.~(\ref{eq:classicalmemory}) trivially with \(\Phi_i = \Phi = \cpt_{2,1}\), there is no memory at all. 
Any random unitary process (e.g., the dephasing in Fig.~\ref{fig:bloch-spheres}) can be written in the form of Eq.~(\ref{eq:classicalmemory}) of classical memory, as explained earlier.
By contrast, the amplitude damping toy model cannot be realized in this way, as will follow from our theorem below.

As the main result of this Letter, we next provide a sufficient criterion for a locally known dynamics \(\Dyn =\dynamics{\cpt_1,\cpt_2}\) to \textit{not} be realizable by means of classical memory according to Def.~\ref{def:classicalmemory}.
Its relevance is twofold. 
First, if the criterion holds, 
we have proof of a persistent quantum environment \(\Env\). 
Second, note that Def.~\ref{def:classicalmemory} is the most general physically measurable pure-state quantum trajectory representation of the given dynamics \(\Dyn\).
Disclosing quantum memory, therefore, rules out the existence of such quantum trajectories.
We will elaborate on these issues in a time-continuous limit in more detail below. 

\paragraph{Criterion---}
For the criterion, we need the concept of \emph{entanglement of assistance}. 
Consider a bipartite quantum state $\choi_{\Sys\Anc}$ of system $\Sys$ and ancilla $\Anc$ (not to be confused with the environment $\Env$). Let \(E[\choi_{\Sys\Anc}]\) be an entanglement monotone (e.g., entanglement of formation or concurrence) \cite{plenioIntroductionEntanglementMeasures2007}. The entanglement of assistance \(\eoa\) is then \cite{divincenzoEntanglementAssistance1999,smolinEntanglementAssistanceMultipartite2005}
\begin{align}
	\label{eq:EoA}
	\eoa\left[\choi_{\Sys\Anc}\right] := \max_{\{p_k, \ket{\psi_k}\}} \sum_{k}^{} p_k E\left[\ket{\psi_k}\right],
\end{align}
i.e., the average entanglement 
\emph{maximized} over any pure-state decomposition of \(\choi_{\Sys \Anc}\). 
Now assume that $\choi_{\Sys\Anc}$ describes the Choi state of the map $\cpt$ acting on the system $\Sys$~\cite{choiCompletelyPositiveLinear1975}, i.e.,
\begin{align}
    \choi_{\Sys\Anc}\left[\cpt\right] &= \choi\left[\cpt\right] = \left(\cpt \otimes \one\right)\ketbra{\phi^+}, \quad \nonumber \text{with} \\
    \ket{\phi^+} &=\frac{1}{\sqrt{d}}\sum_{j=0}^{d-1} \ket{j_\Sys}\ket{j_\Anc},
    \label{eq:bell-plus}
\end{align}
 {where \(d\) is the dimension of the system and the \(\ket{j_{\Sys,\Anc}}\) form an orthonormal basis in \(\Sys\) and \(\Anc\), respectively.}
We find the following theorem:
\begin{theorem}
	\label{th:centraltheorem}
	Let $\cpt_1$ and $\cpt_2$ be two CPT maps. If for the Choi states $\choi_1$ and $\choi_2$ of $\cpt_1$ and $\cpt_2$ we observe
	\begin{align}
		\eoa\left[\choi_1\right] < \eof\left[\choi_2\right],
	\end{align}
    the dynamics $\Dyn = \dynamics{\cpt_1, \cpt_2}$ requires quantum memory.
\end{theorem}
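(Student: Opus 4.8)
The plan is to prove the contrapositive: I assume that $\Dyn = \dynamics{\cpt_1,\cpt_2}$ \emph{can} be realized with classical memory in the sense of Definition~\ref{def:classicalmemory}, and derive $\eoa[\choi_1] \geq \eof[\choi_2]$, which negates the hypothesis. So suppose there is a Kraus decomposition $\{M_i\}$ of $\cpt_1$ and CPT maps $\Phi_i$ with $\cpt_2[\rho_\Sys] = \sum_i \Phi_i[M_i \rho_\Sys M_i^\dagger]$. The engine of the argument is channel--state duality. First I would note that this particular Kraus set singles out a pure-state decomposition of the Choi state $\choi_1$: writing $\ket{\tilde\psi_i} = (M_i \otimes \one)\ket{\phi^+}$ with weights $p_i = \braket{\tilde\psi_i}{\tilde\psi_i}$ and normalized vectors $\ket{\psi_i} = \ket{\tilde\psi_i}/\sqrt{p_i}$, linearity of the Choi construction gives
\begin{align}
    \choi_1 = \sum_i (M_i \otimes \one)\ketbra{\phi^+}(M_i^\dagger \otimes \one) = \sum_i p_i \ketbra{\psi_i}.
\end{align}

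Next I would feed the same Kraus set through $\cpt_2$. Since each $\Phi_i$ acts only on $\Sys$, the Choi state of $\cpt_2$ inherits a \emph{matched} structure built from the very same vectors $\ket{\psi_i}$,
\begin{align}
    \choi_2 = \sum_i p_i\,(\Phi_i \otimes \one)[\ketbra{\psi_i}] =: \sum_i p_i\, \sigma_i,
\end{align}
where each $\sigma_i$ arises from $\ket{\psi_i}$ by a \emph{local} channel on $\Sys$. The two standard properties of any convex-roof monotone now close the argument. Local channels cannot increase entanglement, so $\eof[\sigma_i] \leq E[\ket{\psi_i}]$; convexity of $\eof$ gives $\eof[\choi_2] \leq \sum_i p_i \eof[\sigma_i]$; and the defining maximality of the entanglement of assistance bounds $\eoa[\choi_1]$ from below by the average entanglement of the specific decomposition $\{p_i,\ket{\psi_i}\}$. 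Chaining these,
\begin{align}
    \eof[\choi_2] \leq \sum_i p_i \eof[\sigma_i] \leq \sum_i p_i E[\ket{\psi_i}] \leq \eoa[\choi_1],
\end{align}
which contradicts $\eoa[\choi_1] < \eof[\choi_2]$, completing the contrapositive.

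I expect the main difficulty to be conceptual rather than computational. The crux is to recognize that a classical-memory realization ties $\cpt_1$ and $\cpt_2$ together through one \emph{common} index $i$, so that a single choice of decomposition must simultaneously serve both maps; this is precisely why a \emph{maximization} (entanglement of assistance) is the right quantity for $\choi_1$ while a \emph{minimization} (entanglement of formation) is right for $\choi_2$. The remaining points only require care, not ingenuity: tracking the normalizations $p_i$ through the duality, and checking that the two invoked properties -- monotonicity under local channels and convexity -- hold for the chosen monotone $E$, which is guaranteed for convex-roof measures such as entanglement of formation or concurrence.
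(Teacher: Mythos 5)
Your proof is correct and takes essentially the same route as the paper's own proof: the identical Kraus-induced pure-state decomposition $\ket{\psi_i} = (M_i \otimes \one)\ket{\phi^+}/\sqrt{p_i}$ of $\choi_1$, monotonicity of $\eof$ under the local channels $\Phi_i \otimes \one$, and the convex-roof (convexity) property of $\eof$ to pass from the ensemble $\{p_i, \sigma_i\}$ down to $\eof[\choi_2]$, chained against the defining maximality of $\eoa$. The only cosmetic difference is that you phrase the argument explicitly as a contrapositive ending in a contradiction, while the paper simply derives the inequality $\eoa[\choi_1] \geq \eof[\choi_2]$ under the classical-memory assumption.
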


\begin{proof}
    Suppose the dynamics $\Dyn = \dynamics{\cpt_1, \cpt_2}$ only requires classical memory as defined in Def.~\ref{def:classicalmemory}. 
    Then the local measurement \(\{M_i\}\) implementing the channel \(\cpt_1\) on \(\Sys\) decomposes the  {corresponding} Choi state $\choi_1$ into the pure-state decomposition  $\{p_i, \ket{\psi_i}\}$ with $\ket{\psi_i} = \left(M_i \otimes \one \right)\ket{\phi^+}/\sqrt{p_i}$,  {\(\ket{\phi^+}\) as in Eq.~\eqref{eq:bell-plus}, and \(p_i\) beeing the probability for outcome \(i\).}
    The average entanglement in this decomposition  $\{p_i, \ket{\psi_i}\}$ is upper bounded by the entanglement of assistance:
	\begin{align}
	\label{eq:proofmaximisation}
	\eoa\left[\choi_1\right] =\max_{\{p_k, \ket{\psi_k}\}} \sum_{k}^{} p_k E\left[\ket{\psi_k}\right]\geq \sum_{i} p_i \eof\left[\ket{\psi_i}\right].
	\end{align}
	Local quantum channels can only reduce the entanglement \cite{plenioIntroductionEntanglementMeasures2007}. Therefore, defining $\rho_i \coloneqq \left(\Phi_i \otimes \one \right)\ketbra{\psi_i}$,  {where \(\Phi_i\) is a CPT map that can depend on the previous outcome,} we have
	\begin{align}
	\label{eq:proofLOCC}
	\sum_{i} p_i \eof\left[\ket{\psi_i}\right] &\geq 
    \sum_{i}^{} p_i \eof \left[\rho_i\right].
	\end{align}
    The decomposition $\{p_i, \rho_i\}$ represents the Choi state $\choi_2$ of the second map $\cpt_2$,  {i.e., \(\sum_i p_i \rho_i = \choi_2\)}. 
    However, the average entanglement in this decomposition is lower bounded by the entanglement of the state $\choi_2$ itself.
	\begin{align}
	\label{eq:proof_final_eq}
	\sum_{i}^{}p_i \eof\left[\rho_i\right] &\geq \min_{\{p_k, \ket{\varphi_k}\}} \sum_{k} p_k \eof\left[\ket{\varphi_k}\right]
	=\eof\left[\choi_2\right],
	\end{align}
	where the minimization runs over all pure-state decompositions of $\choi_2$. 
\end{proof}


\paragraph{Discrete example---}
First, we show a two-step dynamics that, upon changing a parameter, can be tuned from the case of verifiable quantum memory according to Thm.~\ref{th:centraltheorem} to the case of classical memory, obeying a representation as in Def.~\ref{def:classicalmemory}.  
We consider a map $\maptad{p}$ representing a thermal amplitude damping of a single qubit given by Kraus operators
\begin{align}
\label{eq:discrete-damping-Kraus}
M_1 &= z_- \sqrt{p}\sigma_-,\!
&&M_2 = z_-(\sqrt{1-p}\sigma_+ \sigma_- + \sigma_-\sigma_+), \notag\\
M_3 &= z_+\sqrt{p}\sigma_+,\!
&&M_4 = z_+ (\sigma_+ \sigma_- + \sqrt{1-p} \sigma_-\sigma_+) ,
\end{align}
where the strength of the channel is given by $p\in \left[ 0, 1\right]$ and
\(z_\pm = 1/\sqrt{1+e^{\pm\beta}}\), 
with $\beta$ a dimensionless inverse temperature.
The zero-temperature amplitude damping channel
with ground state $\ket{0}$ as its fixed point emerges as
$\beta \to \infty$.  At finite temperature, $M_3$ and $M_4$ model absorption from a thermal bath.

We consider a sequence of two maps of this class, i.e., a dynamics $\Dyn=\dynamics{\cpt_1,\cpt_2}=\dynamics{\maptad{p_1}, \maptad{p_2}}$, with \(p_n\) the damping strength at time \(t_n\). For the sake of this example, we fix the inverse temperature  $\beta=0.51$,
the first damping strength \(p_1 = 0.9\), and investigate the nature of the required memory as a function of the second strength \(p_2\). 
We choose the concurrence \(\cof\) as the entanglement monotone \(E\) in Eq.~\eqref{eq:EoA} and write \(\coa\) for the concurrence of assistance. 
In Fig.~\ref{fig:tunable_quantum_memory} we plot \(
    \coa[\chi_1] - \cof[\chi_2] 
\)
and satisfy the criterion for $p_2<0.11$ (orange region). Thus, the corresponding non-Markovian dynamics requires quantum memory. 
For \(p_2 > 0.86\), the dynamics can be modeled by classical memory (blue region). We provide an explicit representation as in Def.~\ref{def:classicalmemory} (see caption of Fig.~\ref{fig:tunable_quantum_memory} for details).

\begin{figure}
    \centering
    \includegraphics[width=\columnwidth]{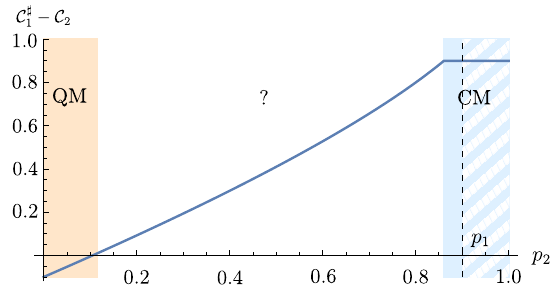}
    \caption{Entanglement difference as a function of the strength parameter $p_2$ of the second dynamical step of a thermal amplitude damping channel
    (see text, other parameters $p_1=0.9$ and \(\beta=0.51\)). For \(p_2 < 0.11\), the criterion in Thm.~\ref{th:centraltheorem} is satisfied (orange) and the dynamics requires quantum memory (QM). 
    For $p_2 < p_1$, the damping gets partially rewound and the dynamics is non-Markovian. Yet for \(0.86 \leq p_2 < p_1\) (solid blue) we can explicitly construct a representation as in 
Def.~\ref{def:classicalmemory}, and therefore only classical memory (CM) is needed -- see Supplemental Material~\cite{Note1}).
    For $p_2 \geq p_1$ the dynamics is Markovian and thus does not require memory at all (blue hatched region).
    For \(0.11 \leq p_2 < 0.86\) (white) we cannot decide whether truly quantum memory is required.}
\label{fig:tunable_quantum_memory}
\end{figure}

\paragraph{Time-continuous example---}

Let us apply the criterion of Thm.~\ref{th:centraltheorem} to the zero-temperature non-Markovian amplitude damping master equation,
\begin{align}
\label{eq:nMadthLindblad}
\dot \rho=\mathcal{L}_t\left[\rho\right] = \frac{\gamma_-(t)}{2} \left(\left[\sigma_- \rho, \sigma_+\right] + \left[\sigma_- ,\rho  \sigma_+\right]\right).
\end{align}
Here \(\gamma_-(t)\) is the instantaneous damping rate which in the non-Markovian case changes sign over time~\cite{Note1}.


For the dynamics resulting from this master equation, we find that 
the concurrence of assistance of the Choi state is equal to the concurrence (of formation) for all times,
$\coa\left[\choi(t)\right]=\cof\left[\choi(t)\right], \forall t$.
In the non-Markovian case, \(\cof\) is a non-monotonous function. Thus, there are times \(t_2 > t_1\) such that
 $\coa\left[\choi(t_1)\right] < \cof\left[\choi(t_2)\right]$,
which shows by virtue of Thm.~\ref{th:centraltheorem} that zero-temperature non-Markovian amplitude damping cannot be realized by means of classical memory.

However, heuristically extending the scenario to a thermal bath, one finds that for sufficiently high temperatures the criterion is no longer violated~\cite{Note1}. 
This does not necessarily mean that the dynamics can be explained by classical memory, but it shows that at higher temperatures it becomes harder to locally verify the quantum nature of the memory.

\paragraph{Dynamics with classical memory---}
Disclosing quantum memory for time-continuous dynamics requires the consideration of the dynamical map at two distinct times, as seen in the previous example. However, to ensure that classical memory is sufficient, one has to explicitly provide a representation in terms of the time-continuous generalization of Def.~\ref{def:classicalmemory}.

For a dynamics \(\Dyn=\left(\cpt_n\right)_{n=1}^{N}\), with \(N\) discrete time steps,
a representation with classical memory takes the form
\begin{align}
		\label{eq:time-continuous-classical}
	\cpt_n\left[\rho\right] = \sum_{i_1, \ldots, i_n}  &M_{i_n}^{(i_1,\ldots, i_{n-1})} \ldots M_{i_2}^{(i_1)} M_{i_1}\rho M_{i_1}^\dagger M_{i_2}^{(i_1)\dagger} \ldots \notag \\
 &\ldots M_{i_n}^{(i_1,\ldots, i_{n-1})\dagger}, \quad 1\le n \le N,
\end{align}
where the superscripts indicate that the measurement operators at a certain step can depend on \emph{all} previous outcomes.
For suitably chosen measurements \(\{M_{i_n}\}\), this construction allows for a time-continuous limit.

Eq.~\eqref{eq:time-continuous-classical} describes the most general form of a physically measurable pure-state trajectory representation of a dynamics.
Hence, for a dynamics which requires truly quantum memory according to Thm.~\ref{th:centraltheorem}, a pure-state unraveling is immediately ruled out. 
On the other hand, a non-Markovian dynamics which can be written in this way, i.e., which only requires classical memory, admits a pure-state trajectory representation by construction. This clarifies that the often debated existence of physically measurable non-Markovian quantum trajectories depends on the classicality of the memory needed to implement the dynamics~\cite{Note1}. In the following, we provide some time-continuous examples.

As mentioned earlier, any dynamics with random unitary representation can be realized with classical memory  {(see also Refs.~\cite{grotzQuantumDynamicsFluctuating2006a,chruscinskiNonMarkovianRandomUnitary2013b})}.
Another simple case is a probabilistic mixture of multiple Markovian dynamics. 
The prime example is the master equation of eternal non-Markovianity requiring two bits of classical memory~\cite{megierEternalNonMarkovianityRandom2017}. 
There, the outcome of an initial random choice with probabilities $p_i$ determines which of three different Markovian dynamics with generators $\mathcal{L}_i$ is implemented for all times,
$\cpt_t = p_1 \e^{t \mathcal{L}_1} + p_2 \e^{t \mathcal{L}_2} + p_3 \e^{t \mathcal{L}_3}$.
This dynamical map is in general non-Markovian with respect to the CP-divisibility criterion~\cite{megierEternalNonMarkovianityRandom2017,filippovDivisibilityQuantumDynamical2017, breuerMixinginducedQuantumNonMarkovianity2018,jagadishConvexCombinationsPauli2020}. Nevertheless, it has an obvious pure-state trajectory representation. 
Further dynamics with classical memory are given by quantum semi-Markov processes, where the application of the next step depends on a (classical) waiting time distribution~\cite{breuerQuantumSemiMarkovProcesses2008,vacchiniGeneralizedMasterEquations2016,chruscinskiSufficientConditionsMemorykernel2016,megierEvolutionEquationsQuantum2020,megierMemoryEffectsQuantum2021}.



The richness of dynamics with classical memory is, however, far greater. Eq.~\eqref{eq:time-continuous-classical} can serve as a starting point to derive new non-Markovian master equations with classical memory based on a quantum-jump-inspired trajectory representation, as we show next.



We use a qubit and start from a standard quantum jump trajectory which describes amplitude damping (jump operator $\sigma_-$).
The classical memory keeps track of whether the jump has already occurred.
If so, the jump operator is replaced by $\sigma_+$.
One bit of classical memory is sufficient for the implementation of this scheme. 
Integrating the succession of maps over all possible jump times yields the non-Markovian time-local master equation
\begin{align}
\label{eq:basicmodellindblad}
\mathcal{L}_t\left[\rho\right] = \frac12 \sum_{k=1, 2} \gamma_k(t)  \left(\left[L_k, \rho L_k^\dagger \right]+\left[L_k\rho, L_k^\dagger\right] \right),
\end{align}
with
\begin{align*}
\gamma_1(t)&=\frac{\damp  (\damp  t-1)}{2 (\damp  t- \e^{\damp  t})},
&&\hphantom{L_2}\gamma_2(t) = \frac{\damp  \left(\e^{\damp  t}-1\right)}{8 \left(\e^{\damp  t}-\damp
	t\right)},\\
L_1 &= \sigma_-,
&&\hphantom{\gamma_2(t)}L_2 = \sigma_z.
\end{align*}
A detailed derivation is presented in the Supplemental Material~\cite{Note1}.
Let us stress that the non-Markovian master equation~\eqref{eq:basicmodellindblad}  has a physically realizable, measurable quantum jump representation by construction. 

It is interesting to note that the dynamics given by Eq.~\eqref{eq:basicmodellindblad} is P-indivisible and thus non-Markovian in a stricter sense than the P-divisible master equation of eternal non-Markovianity discussed earlier. 

\begin{figure}
\begin{center}
\begin{tikzpicture}
    \draw[rounded corners=7pt] (0.5, 1.5) rectangle (4.5, 3.5);
    \draw[fill=mathematicablue, fill opacity=0.8, rounded corners=7pt]   (0.5,1.5) rectangle ++(4,2);
    \fill [pattern={mylines[size= 5pt,line width=0.8pt,angle=45]},  pattern color=mathematicadarkblue]
      (0.5,3) --
      ++(4,0) {[rounded corners=7] --
      ++(0,0.5) --
      ++(-4,0)} --
      cycle
      {};
    \draw[draw, color=mathematicadarkblue] 
      (0.5,3) --
      ++(4,0) {[rounded corners=7] --
      ++(0,0.5) --
      ++(-4,0)} --
      cycle
      {};
    \fill[fill=white, fill opacity=1]  
      (3,3) --
      ++(0,-1.5) {[rounded corners=7] --
      ++(-2.5, 0)} --
      ++(0, 1.5) --
      cycle
      {};
    \fill[fill=mathematicaorange, fill opacity=0.8]  
      (3,3) --
      ++(0,-1.5) {[rounded corners=7] --
      ++(-2.5, 0)} --
      ++(0, 1.5) --
      cycle
      {};
    \draw[color=mathematicadarkblue] (3, 1.5) -- ++(0, 1.5);
    \draw[rounded corners=7pt, line width=1.5pt, color=mathematicadarkblue] (0.5, 1.5) rectangle (4.5, 3.5);
    \draw[fill=white, rounded corners=7pt] (0.6, 1.7) rectangle (2.9, 2.25) node[pos=.5] {};
    \draw[fill=mathematicaorange, fill opacity=0.8, rounded corners=7pt, line width=0pt] (0.6, 1.7) rectangle (2.9, 2.25) node[pos=.5] {};
    \draw[rounded corners=7pt, color=mathematicadarkorange, line width=1.1pt, opacity=0.8] (0.6, 1.7) rectangle (2.9, 2.25) node[pos=.5, color=black, opacity=1] {\small{$\eoa(t_1) < \eof(t_2)$}};
    \fill[color=white, fill opacity=0.85] (1.5, 3.05) rectangle (3.5, 3.45) node[pos=.5, color=black, opacity=1] {Markovian};
    \fill[color=white, fill opacity=0.85] (1.1, 2.45) rectangle (3.9, 2.85) node[pos=.5, color=black] {non-Markovian};
    \draw[fill=mathematicablue, fill opacity=0.8]   (5,2.75) rectangle (5.75,3.25);
    \draw[color=white] (6.5, 3) rectangle (8, 3) node[pos=0.5, color=black] {classical memory};
    \draw[fill=mathematicaorange, fill opacity=0.8]   (5,1.75) rectangle (5.75,2.25);
    \draw[color=white] (6.5, 2) rectangle (8, 2) node[pos=0.5, color=black] {quantum memory};
\end{tikzpicture}
\end{center}
    \caption{Non-Markovian quantum dynamics may emerge from classical or truly quantum memory. Markovian dynamics trivially falls into the first class.
    Some dynamics requiring a truly quantum memory can be detected by the criterion proposed in Thm.~\ref{th:centraltheorem}  {(encircled region within the orange area). An example is the non-Markovian amplitude damping (see Eq.~\eqref{eq:nMadthLindblad}). Examples of non-Markovian dynamics with classical memory (solid blue region) encompass Eq.~\eqref{eq:basicmodellindblad} and the master eqation of enternal non-Markovianity~\cite{megierEternalNonMarkovianityRandom2017}. We conjecture that there exist dynamics which require quantum memory (orange region) but which cannot be detected by our criterion.}}
    \label{fig:venn_like}
\end{figure}
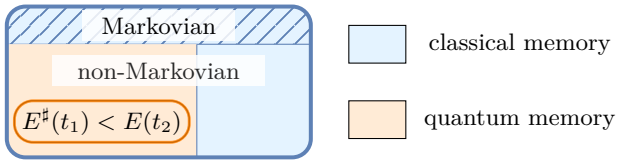

\paragraph{Conclusion---}
Non-Markovian quantum dynamics is associated with memory effects. However, this memory is not necessarily provided by environmental quantum degrees of freedom but may be classical. In this Letter, we investigate the nature of that memory from a local viewpoint. Focusing on the dynamics in the open system alone, we make no assumption about the physics of the environment.

We start from a definition for a dynamics requiring classical memory only.
As the main result, we then present a criterion in terms of an inequality whose satisfaction rules out any such realization of the given dynamics.
Crucially, this criterion depends solely on information about the single-time local dynamics of the open system, no multi-time statistics is required.
Its tomography and thus the disclosure of environmental quantum memory is in experimental sight.

We illustrate the concept with several discrete and time-continuous 
examples with and without truly quantum memory, including cases which can be tuned between the two regimes.
In particular, we show how to construct a class of non-Markovian time-local master equations admitting a pure-state quantum jump trajectory representation based on classical memory. No such unraveling can exist for 
a dynamics which requires truly quantum memory.

Our criterion is sufficient but not necessary -- refinements are thus desirable (see Fig.~\ref{fig:venn_like}). The presented concepts serve as an immediate starting point for further investigations, which include characterizing the size of the quantum or classical memory, criteria for unital dynamics, and the construction of physically realizable non-Markovian trajectories of the diffusive type.
More generally, {our work shows that an environment-agnostic perspective can be a valuable tool for characterizing environmental properties without making prior assumptions about the underlying physics.}

\section{Acknowledgements}
We are grateful for fruitful discussions with Dario Egloff, Kimmo Luoma, and Andrea Smirne. Further, we acknowledge helpful comments by many participants at the SMP54 conference in Toru\'n, Poland.

\bibliography{literature_memory}

\clearpage

\title{Supplemental Material: Local disclosure of quantum memory in non-Markovian dynamics}

\maketitle

\setcounter{page}{1}
\renewcommand{\theequation}{S.\arabic{equation}}
\setcounter{equation}{0}

\section{Definitions of quantum non-Markovianity}
Classical non-Markovianity is a well defined term. 
For quantum dynamics, however, there are several nonequivalent but closely related definitions based on different motivations. 
Here, we will list some important instances. 
Detailed reviews of quantum non-Markovianity can be found in Refs.~\cite{breuerColloquiumNonMarkovianDynamics2016,liConceptsQuantumNonMarkovianity2018}.

\paragraph{Divisibility:}
A dynamics is called CP-divisible, whenever for any choice of two successive maps $\cpt_1, \cpt_2$ the intermediate map $\cpt_{2,1}= \cpt_2 \circ \cpt_1^{-1}$ is a completely positive map (CP-map).
If \(\cpt_{2,1}\) is a positive but not completely positive map, the dynamics is called P-divisible.  
If \(\cpt_{2,1}\) is not positive, the dynamics is said to be indivisible~\cite{rivasEntanglementNonMarkovianityQuantum2010}.

\paragraph{Trace distance:}
The trace distance is an indicator for the distinguishability of two quantum states. 
Under Markovian dynamics, the trace distance of any two states can only decrease over time. Thus, an increase of the trace distance between two state is an indicator for non-Markovianity~\cite{breuerMeasureDegreeNonMarkovian2009}.

\paragraph{Entanglement with an ancilla:}
Let us assume the system of interest \(\Sys\) is initially entangled with an ancilla \(\Anc\) (which is not to be confused with the environment). Under the application of a Markovian dynamics to the system \(\Sys\) alone, the entanglement between \(\Sys\) and \(\Anc\) can only decrease. 
Thus, an increase of the entanglement is a witness of non-Markovianity in the dynamics of \(\Sys\)~\cite{rivasEntanglementNonMarkovianityQuantum2010}.

\paragraph{State space volume:}
The volume of states that can be reached by a dynamics can only shrink under Markovian evolution.
Therefore, similar to the trace distance criterion, non-Markovianity can be detected if the state space volume increases over time~\cite{lorenzoGeometricalCharacterizationNonMarkovianity2013a}.

\section{Qubit-Qubit Toy Model}

Starting from the interactions of the form in Eq.~(2) between system and environment, the local dynamics on the system $\Sys$ can be derived by tracing out the environment.
Initializing the environment $\Env$ in the $\ket{0}$-state and setting
\begin{align}
    f = \frac{1}{2} \arccos{(1-p)}, \quad 
    g = \arcsin{(\sqrt{p})},
\end{align}
with $p\in\left[0, 1\right]$ defining the strength of the process, we obtain the familiar amplitude damping and dephasing channel, respectively.
A Kraus representation of the dephasing channel $\cpt_{\mathrm{dephase}}$ takes the form
\begin{align}
K_1 = \sqrt{p/2} \sigma_x,
&& K_2 = \sqrt{1-p/2} \id,
\end{align}
while the amplitude damping channel $\cpt_{\mathrm{damp}}$ is given by
\begin{align}
K_1 = \sigma_-
 && K_2 = \sqrt{1-p} \sigma_+\sigma_- + \sigma_-\sigma_+.
\end{align}

The dynamics $\Dyn_\textrm{dephase} = (\cpt_{\mathrm{dephase}}, \id)$ as well as the dynamics $\Dyn_\textrm{damp} = (\cpt_{\mathrm{damp}}, \id)$ are non-Markovian with respect to the different notions of non-Markovianity listed above. Both dynamics are indivisible for any nontrivial strength \(p\). 
Considering Fig.~1, we can directly see that the distance between $\ket{0}$ and $\ket{1}$ corresponds to the full diameter of the Bloch sphere for $n=0$ and $n=2$. In between these two steps, the images of $\ket{0}$ and $\ket{1}$ approach each other, which can also be seen in that figure. 
Equivalently, the increase of the state space from \(n=1\) to \(n=2\) is directly visible. 
The entanglement criterion is also fulfilled, as can be verified for example by starting with a Bell state of system \(\Sys\) and ancilla \(\Anc\).

\begin{figure}
	\begin{center}
		\begin{tikzpicture}
		\draw (-0.5, -0.5) node{$\rho_\Sys$};
		\draw (-0.5, -1.3) node{$\rho_\Env$};
		\qwire{0}{-0.5}{0.5}
		\qwire{0}{-1.3}{0.5}
		\twoqubgate{0.5}{0}{$U_1$}{0.6}
		\qwire{1.1}{-0.5}{1}
		\qwire{1.1}{-1.3}{1}
		\twoqubgate{2.1}{0}{$U_2$}{0.6}
		\qwire{2.7}{-0.5}{0.8}
		\qwire{2.7}{-1.3}{0.8}
        \draw[dashed] (1.6, -0.7)--++(0, 0.55);
        \node (1) at (1.6, -0.9) {$t_1$};
        \draw[dashed] (3.1, -0.7)--++(0, 0.55);
        \node (1) at (3.1, -0.9) {$t_2$};
		\end{tikzpicture}
	\end{center}
	\caption{A circuit of a two-step dynamics. The system \(\Sys\) interacts twice with the same environment \(\Env\). The local maps on \(\Sys\) for times \(t_{1,2}\) are given in Eq.~(1).}
	\label{fig:CollisionToyModel}
\end{figure}
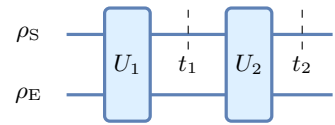


\section{Random unitary representation of the dephasing dynamics}
Here, we show how the non-Markovian dephasing dynamics in Fig.~1 can emerge from random unitaries and classical memory.
The first map is realized by flipping a weighted coin, which shows heads with probability \(p/2\) and tails with probability \(1-p/2\). Whenever the coin shows heads, the unitary \(U_{\mathrm{heads}} = \sigma_x\) is implemented. In the case of tails, the trivial unitary \(U_\mathrm{tails} = \id\) is applied. The result of the coin toss is stored in a memory. 

In the second step, the dynamics can be rewound completely by applying the inverse of the unitary operation in the first step. Thus, if the state was flipped by \(U_\mathrm{heads}\), we apply \(U_\mathrm{heads}^\dagger = \sigma_x\) and the flipping is reversed. Trivially, for tails in the first step, nothing is done in the second step either.
Thus, the local dephasing dynamics can be realized by a random unitary process.
Crucially, this procedure can only be applied if the (classical) outcome of the initial coin toss has been stored. 

On the global level of \(\Sys\) and \(\Env\) the random unitary process just described is of course not equivalent to the circuit in Fig.~\ref{fig:CollisionToyModel}. However, on the local level of \(\Sys\) alone the two different realizations of the dynamics cannot be distinguished (see also Ref.~\cite{smirneConnectionMicroscopicDescription2021}).
The construction shown here can be extended to arbitrary random unitary processes.

\section{Regime with classical memory}
For the solid blue region in Fig.~2 of the main text, we can provide an explicit construction of a realization with classical memory as in Def.~1. Crucially, the dynamics is non-Markovian in this regime since the damping at time \(t_1\) is stronger than at \(t_2\), i.e., the evolution is partially rewound.

The map describing the evolution \(\cpt_1\) up to time \(t_1\) is given by the Kraus representation in Eq.~(10). 
A measurement of these Kraus operators would however not help to construct a realization with classical memory. Instead, we have to measure another set of Kraus operators defined by
\begin{align}
M_\alpha &= \frac{1}{\sqrt{2}}\left(M_1 + M_3\right), &&M_\beta=\frac{1}{\sqrt{2}}\left(M_1 - M_3\right),\\
M_\gamma &= M_2, &&M_\delta = M_4.
\end{align}
This Kraus representation equivalently implements the map \(\cpt_1\).

Depending on the outcome of this first measurement, different channels \(\Phi_i\) are implemented (cf.~Def.~1).
For the four different outcomes, they are given by:
\begin{align}
	\alpha &\longrightarrow \Phi_\alpha\left[\placeholder \right] = \cpt_0[\sigma_x\placeholder \sigma_x],\\
	\beta &\longrightarrow \Phi_\beta\left[\placeholder \right] = \cpt_0[\sigma_y\placeholder \sigma_y],\\
	\gamma &\longrightarrow \Phi_\gamma\left[\placeholder \right] = \cpt_0[\placeholder],\\
	\delta &\longrightarrow \Phi_\delta\left[\placeholder \right] = \cpt_0[\placeholder],
\end{align}
where \(\cpt_0\) is a channel which is the same for each outcome and which must not be confused with the resulting map \(\cpt_2\). 
The Choi state of \(\cpt_0\) is given by
\begin{widetext}
\begin{align}
    \choi(\cpt_0)=\begin{pmatrix}
        \frac{1}{2} p ({z_0}-1)+1 & 0 & 0 & \frac{2 \sqrt{1-p}}{{p_1}
      \sqrt{1-{z_0}^2}+2 \sqrt{1-{p_1}}} \\
 0 & \frac{1}{2} p ({z_0}+1) & 0 & 0 \\
 0 & 0 & -\frac{1}{2} p ({z_0}-1) & 0 \\
 \frac{2 \sqrt{1-p}}{ {p_1} \sqrt{1-z_0^2}+2 \sqrt{1-p_1}  } & 0
   & 0 & 1-\frac{1}{2} p ({z_0}+1)
    \end{pmatrix},
\end{align}
\end{widetext}
where \(z_0 = -\tanh{\beta/2}\).

On average this measurement procedure realizes the channel \(\cpt_2\), i.e., 
\begin{align}
    \cpt_2[\placeholder] = \sum_{i=\alpha,\beta,\gamma,\delta} \Phi_i[M_i \placeholder M_i^\dagger].
\end{align}
Note that this construction is only valid for the parameters corresponding to the blue region in Fig.~2. 
For other parameter choices, the map \(\cpt_0\) is no longer a quantum channel.

\section{Non-Markovian amplitude damping}
The non-Markovian amplitude damping master equation for a vacuum bath can be derived in several ways (see for example~Refs.~\cite{breuerTheoryOpenQuantum2007,kretschmerCollisionModelNonMarkovian2016,garrawayNonperturbativeDecayAtomic1997,diosiNonMarkovianQuantumState1998}).
The instantaneous damping rate \(\gamma_-\) depends on time and becomes negative during the evolution if the dynamics is non-Markovian. 
A possible parametrization of the time-dependent decay rate can be explicitly expressed as
\begin{align}
    \gamma_-(t) = -\frac{2 \coupling^2 \left(-\gamma_0 +2\gammashort \sinh
 \gammashort t +\gamma_0  \cosh
   \gammashort t \right)}{-2\gamma_0
    \gammashort \sinh  \gammashort t+\left(8 \coupling^2-\gamma_0 ^2\right) \cosh
   \gammashort t +8 \coupling^2},
\end{align}
with
\begin{align}
    \gammashort = \frac{1}{2} \sqrt{\gamma_0 ^2-16 \coupling^2},
\end{align}
\(\gamma_0\) a constant overall damping strength, and \(\alpha\) a parameter which tunes the non-Markovianity of the process.

In Fig.~\ref{fig:CC-NM-Amp-Damp} we plot the concurrence of assistance \(\coa\) and the concurrence of formation \(\cof\) over time for parameters \(\gamma_0 = \alpha =1 \).
Both quantities  are equal at all times in the case of non-Markovian amplitude damping, i.e., \(\coa(t) = \cof(t)\).
The revivals of the entanglement are clearly visible and, thus, there are times \(t_1\) and \(t_2\) such that \(\coa(t_1) < \cof(t_2)\). Accordingly, the dynamics cannot be realized with classical memory only. It should be noted, though, that in the beginning of the dynamics there is a time interval for which the criterion cannot be satisfied. This is in agreement with the fact that the dynamics is divisible up to the first minimum in Fig.~\ref{fig:CC-NM-Amp-Damp} (\(t\approx 1.9\)) and thus a memoryless realization exists in this regime.
\begin{figure}
    \centering
    \includegraphics[width = \columnwidth]{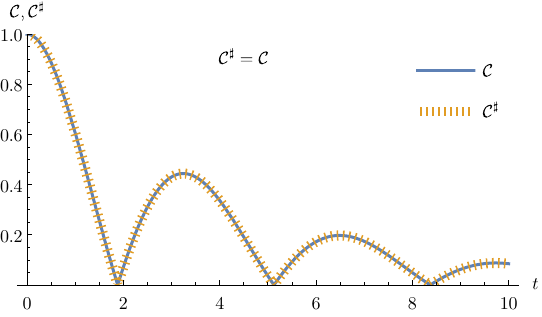}
    \caption{Concurrence of assistance \(\coa\) and concurrence of formation \(\cof\) of the Choi state of the dynamical map of the non-Markovian amplitude damping. Due to the revivals, there are times \(t_1\) and \(t_2\) such that \(\coa(t_1) < \cof(t_2)\) and the dynamics verifiably requires quantum memory.}
    \label{fig:CC-NM-Amp-Damp}
\end{figure}

\subsection{Thermal amplitude damping}
The non-Markovian amplitude damping can heuristically be extended to a finite-temperature case by coupling the qubit via a memory qubit to a Markovian thermal bath~\cite{heinekenQuantummemoryenhancedDissipativeEntanglement2021}. 
This leads to a second dissipative channel with operator \(\sigma_+\) and rate \(\gamma_+(t)\) in the master equation~(11).
\begin{align}
\label{eq:nMadthLindbladThermal}
\dot \rho=\mathcal{L}_t\left[\rho\right] = &\frac{\gamma_-(t)}{2} \left(\left[\sigma_- \rho, \sigma_+\right] + \left[\sigma_-, \rho  \sigma_+\right]\right) \notag\\
+&\frac{\gamma_+(t)}{2} \left(\left[\sigma_+ \rho, \sigma_-\right]  + \left[\sigma_+, \rho  \sigma_-\right]\right).
\end{align}
This equation models two competing non-Markovian amplitude damping processes with either the ground or the excited state as the fixed point. 
The analytical expressions for \(\gamma_\pm(t)\) is too lengthy and of no importance here. Their graphs are shown in Fig.~\ref{fig:gammapm} for parameters \(\gamma_0 = \alpha = 1\) and inverse temperature  \(\beta = 3.66\).
The rates flip sign at poles, a typical behavior for non-Markovian quantum dynamics.
\begin{figure}
    \centering
    \includegraphics[width=\columnwidth]{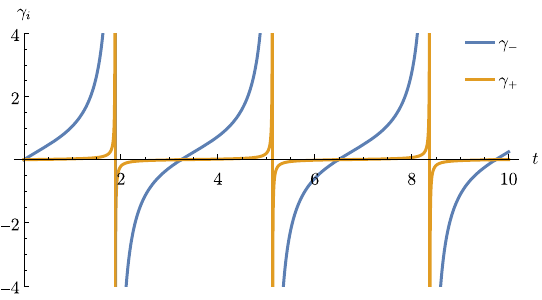}
    \caption{Time dependent rates \(\gamma_\pm\) of the time-local master equation \eqref{eq:nMadthLindbladThermal} for parameters \(\gamma_0 = \alpha =1\) and inverse temperature \(\beta = 3.66\). Both rates show poles and change sign over time. This is a typical behavior for non-Markovian quantum dynamics.} 
    \label{fig:gammapm}
\end{figure}

In Fig.~\ref{fig:CC-NM-Amp-Damp-Thermal} we show \(\coa\) and \(\cof\) as a function of time for the same choice of parameters. In this finite-temperature case, we see that \(\coa \neq \cof\).
In particular, for the chosen temperature and higher, the minima of \(\coa\) are above the highest maximum of \(\cof\). Thus, we have \(\coa\geq \cof\) for all times and the quantumness of the memory cannot be verified by our criterion in Thm.~1.
\begin{figure}
    \centering
    \includegraphics[width = \columnwidth]{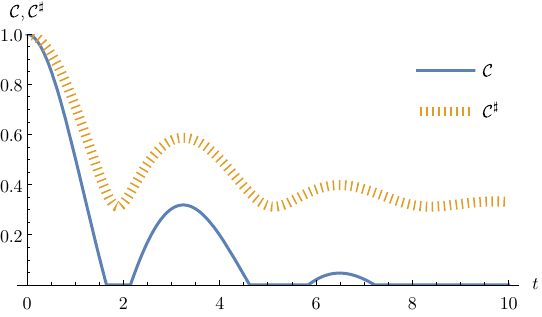}
    \caption{The plot shows the concurrence of assistance \(\coa\) and the concurrence of formation \(\cof\) of the Choi state of the non-Markovian thermal amplitude damping for an inverse temperature \(\beta = 3.66\).
    Both \(\coa\) and \(\cof\) still show revivals. However, the minima of \(\coa\) are above the maxima of \(\cof\). Thus, we have \(\coa(t_1) \geq \cof(t_2)\) for all choices of \(t_2>t_1\). Therefore, a quantum memory cannot be witnessed in this finite-temperature scenario. However, since a representation with classical memory is also missing, we cannot decide which kind of memory is required for the given dynamics.}
    \label{fig:CC-NM-Amp-Damp-Thermal}
\end{figure}

\section{Obtaining maps and master equations from a measurement prescription}

Here we will show how the genuinely non-Markovian master equation in Eq.~(13) emerges from a quantum jump trajectory representation with classical memory. 
The procedure described here by means of an example can straightforwardly be extended to more complex quantum jump scenarios.

The basic strategy is depicted in Fig.~\ref{fig:grundmodell}. 
There are two different measurements schemes \(\mathcal{M}\) and \(\tilde{\mathcal{M}}\) which represent different maps.
The first scheme \(\mathcal{M}\) is continuously applied until outcome \(1\) is observed. This outcome signifies a jump in the system. 
After this jump, the measurement scheme is switched and \(\tilde{\mathcal{M}}\) is applied for all times. 

\begin{figure}
    \centering
    \begin{tikzpicture}[scale=0.7]
	\draw (-0.5, 0) node{$\rho_S(0)$};
	\qwire{0}{0}{0.5}
	\measurement{0.5}{0}
	\qwire{1.5}{0}{0.5}
	\draw (2.5, 0) node{?};
	\memoryarrow{1}{-0.35}{1.5}
	\qwireup{3}{0}{4.5}{0.8}
	\qwireup{3}{0}{4.5}{-0.8}
	\qwire{4.5}{0.8}{0.5}
	\qwire{4.5}{-0.8}{0.5}
	\measurement{5}{0.8}
	\measurement{5}{-0.8}
	\memoryarrow{5.5}{0.45}{1.5}
	\draw (7, 0.8) node{?};
	\qwire{6}{-0.8}{1}	
	\node at (7.5, -0.8) {$\dots$};
	\node [] (71) at (1, 0.7) {$\mathcal{M}$};
	\node [] (72) at (5.5, 1.5) {$\mathcal{M}$};
	\node [] (73) at (5.5, -1.5) {$\tilde{\mathcal{M}}$};
	\node [] (74) at (3.7, 0.8) {$M_2$};
	\node [] (75) at (3.7, -0.8) {$M_1$};
	\end{tikzpicture}
    \caption{Scheme of a measurement prescription with classical memory. Measurement \(\mathcal{M}\) is performed as long as it yields outcome \(2\) which is associated with the measurement operator \(M_2\). Once the measurement yields outcome \(1\) (with measurement operator \(M_1\), the measurement scheme is changed and a measurement \(\tilde{\mathcal{M}}\) is applied for all subsequent time steps.}
    \label{fig:grundmodell}
\end{figure}
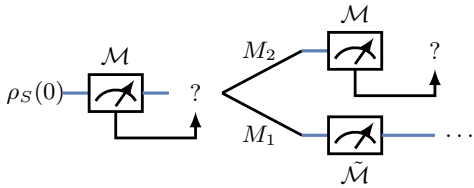

In this example, each measurement scheme has a Kraus representation of two measurement operators.
The first one \({\mathcal{M}}\) describes an amplitude damping process with $\ket{0}$ as its fixed point and is given by
\begin{align}
\label{eq:defK1K2}
M_1 &= -\i \sqrt{\damp \d t} \sigma_-,
&&
M_2=\id - \frac{\damp \d t}{2} \sigma_-\sigma_+,
\end{align}
where $\d t$ is an infinitesimal time step and $\damp$ a damping rate.
Measurement scheme $\tilde{M}$, which describes a damping towards \(\ket{1}\), takes the form
\begin{align}
\label{eq:defKAKB}
\tilde{M}_1 &= -\i \sqrt{\damp \d t} \sigma_+,
&&\tilde{M}_2 = \id - \frac{\damp \d t}{2}\sigma_+ \sigma_-.
\end{align}

Thus, a quantum jump representation of an amplitude damping channel is applied until a jump occurs. Once the jump happens (given by measurement operator \(M_1\)), the direction of the damping is reversed and an amplitude damping channel towards the excited state is applied for all times.

This measurement prescription requires only one classical bit to be stored in order to describe the future dynamics of the system. This bit has to encode whether the operator $M_1$ has already been applied to the system in the history, i.e., whether a jump has already occurred.

Let us calculate the dynamics which is generated by this construction.
To obtain the map \(\cpt_t\) at time \(t\), we have to integrate over all possible trajectories weighted by their probabilities.
We can distinguish two different types of trajectories. There is a single one which does not contain any jump, i.e., where the measurement operator \(M_2\) is applied for all times. 
All other trajectories contain exactly one jump with operator \(M_1\) at a certain time \(t'\) and a subsequent evolution according to measurement scheme \(\tilde{\mathcal{M}}\). 

As long as outcome \(2\) occurs, the state evolves according to the following differential equation
\begin{align}
    \rho(t + \d t) = M_2 \rho(t) M_2^\dagger.
\end{align}
The dynamical map \(\M_2(t)\) which solves this equation for any input state is completely positive but not trace-preserving.
Furthermore, we denote the map for the single quantum jump by
\begin{align}
    \M_1[\rho] = M_1 \rho M_1^\dagger.
\end{align}

After the jump has occurred, the outcomes no longer matter, since we are interested in the reduced dynamics here. 
Thus, the dynamics after a jump is given by the equation
\begin{align}
    \rho(t+\d t) = \tilde{M}_1 \rho(t) \tilde{M}_1^\dagger + \tilde{M}_2 \rho(t) \tilde{M}_2^\dagger,
\end{align}
and we denote the resulting CPT dynamical map by \(\tilde{\M}(t)\).

The dynamics in \(\Sys\) for the given trajectory construction with classical memory is then given by
\begin{align}
    \cpt_t = \M_2(t) + \int_{0}^{t} \tilde{\M}(t-t') \circ \M_1 \circ \M_2(t') \d t'.
\end{align}
The first term on the right-hand side represents the trajectory without any jump.
The second one integrates over all trajectories that involve a jump. All these trajectories first evolve up to time \(t'\) according to the no-jump evolution \(\M_2(t')\). At time \(t'\), the jump happens, i.e., the map \(\M_1\) is applied. After the jump, the system evolves according to the dynamical map \(\tilde{\M}(t-t')\) up to time \(t\).
In this qubit case, the integration can conveniently be done in a Bloch vector representation~\cite{bethruskaiAnalysisCompletelypositiveTracepreserving2002}.

We can then bring the explicitly time-dependent generator of this dynamics \(\mathcal{G} = \dot{\cpt}_t\circ \cpt_t^{-1}\) into the form of a GKSL-type equation~\cite{zimanDescriptionQuantumDynamics2005,zimanOpenSystemDynamics2010a,hallCanonicalFormMaster2014} and obtain
\begin{align}
\label{eq:basicmodellindblad2}
\mathcal{L}_t\left[\rho\right] = \frac12 \sum_{k=1, 2} \gamma_k(t)  \left(\left[L_k, \rho L_k^\dagger \right]+\left[L_k\rho, L_k^\dagger\right] \right),
\end{align}
with 
\begin{align}
\gamma_1(t) &=\frac{\damp  (\damp  t-1)}{2 (\damp  t- \e^{\damp  t})},
&&\hphantom{L_2}\gamma_2(t)=\frac{\damp  \left(\e^{\damp  t}-1\right)}{8 \left(\e^{\damp  t}-\damp
	t\right)},\\
L_1 &= \sigma_-,
&&\hphantom{\gamma_2(t)}L_2 = {\sigma_z}.
\end{align}
The instantaneous rate \(\gamma_1(t)\) becomes negative for times \(t > 1/\kappa\). The dynamics described by this master equation is genuinely non-Markovian and, in particular, P-indivisible beyond this critical time.

\section{Relation to a physically measurable pure-state trajectory picture}
Markovian quantum dynamics can be unraveled into a set of pure-state trajectories that are physically measurable~\cite{wisemanQuantumMeasurementControl2009,brunSimpleModelQuantum2002}.
This means that there is a continuous measurement which keeps the system in a pure state at all times. Motivated by experimental setups, this continuous measurement is often constructed as an indirect monitoring of the system by measuring the environment.
However, the continuous measurement can equally be formulated on the system alone.

For non-Markovian dynamics the situation is less clear and there has been a debate on whether such physically measurable trajectories can exist~\cite{diosiNonMarkovianContinuousQuantum2008,wisemanPureStateQuantumTrajectories2008,breuerGenuineQuantumTrajectories2004,pellegriniNonMarkovianQuantumRepeated2009,kronkeNonMarkovianQuantumTrajectories2012,smirneRateOperatorUnraveling2020,megierContinuousQuantumMeasurement2020}. 
Many non-Markovian quantum trajectory constructions have been developed as powerful mathematical and numerical tools to approach the dynamics of open quantum systems in terms of pure states without a continuous measurement interpretation
~\cite{diosiNonMarkovianStochasticSchrodinger1997,breuerStochasticWavefunctionMethod1999,jackNonMarkovianQuantumTrajectory1999,piiloNonMarkovianQuantumJumps2008,hartmannExactOpenQuantum2017,gasbarriStochasticUnravelingsNonMarkovian2018,linkNonMarkovianQuantumDynamics2022a,beckerQuantumTrajectoriesTimelocal2023}. Thus, even though these frameworks decompose the reduced dynamics of the system into trajectories, those could not necessarily be realized in a time-continuous way by an actual measurement apparatus. 

The time-continuous limit of Eq.~(12) is the most general physically measurable quantum trajectory theory.   
In the Markovian case, the construction reduces to a pure-state unraveling of the dynamics, where the infinitesimal map connecting two times \(\cpt_{t+\mathrm{d}t,t}\) does not depend on previous outcomes~\cite{wisemanQuantumMeasurementControl2009}. Different decompositions of the same map \(\cpt_{t+\mathrm{d}t,t}\) lead to different unravelings of the dynamics, e.g., a jump unraveling or a diffusive unraveling. 
In so-called adaptive measurement schemes, this Kraus decomposition may depend on previous outcomes~\cite{wisemanQuantumMeasurementControl2009, karasikHowManyBits2011,daryanooshQuantumJumpsAre2014,beyerCollisionmodelApproachSteering2018}. 
Such adaptive measurements are naturally contained in  Eq.~(12).

The trajectory construction in Eq.~(12) is, however, far more general. It does not only allow the Kraus decomposition of the given reduced dynamics \(\cpt_{t+\mathrm{d}t,t}\) to depend on previous outcomes, but the map itself (denoted by \(\Phi_i\) in the two-step dynamics in Eq.~(3)) changes depending on the previously measured signal, as in feedback scenarios~\cite{wisemanQuantumMeasurementControl2009}.
Thus, the continuous case of Eq.~(12)) is the most general form of a physically measurable pure-state trajectory representation which includes the standard Markovian unravelings as well as adaptive measurement schemes and feedback.
Crucially, the average dynamics can be non-Markovian. 
Thus, non-Markovian dynamics that do not require truly quantum memory, have a physically measurable pure-state trajectory representation. On the other hand, if a dynamics is verified to rely on a truly quantum memory, such a trajectory picture is ruled out. 
 

\end{document}